\documentclass[journal,final]{IEEEtran}%
\usepackage{amssymb}
\usepackage{amsmath}
\usepackage{amsfonts}
\usepackage{cite}%
\setcounter{MaxMatrixCols}{30}%
\usepackage{graphics}
\usepackage{epsfig}
\usepackage{hyperref}
\providecommand{\U}[1]{\protect\rule{.1in}{.1in}}
\newtheorem{theorem}{Theorem}

\newtheorem{condition}{Assumption}

\newtheorem{definition}{Definition}

\newtheorem{lemma}{Lemma}

\newtheorem{remark}{Remark}

\begin{document}

\title{Effect of Coupling on the Epidemic Threshold in Interconnected Complex
Networks: A Spectral Analysis}
\author{Faryad Darabi Sahneh, Caterina Scoglio, Fahmida N. Chowdhury \thanks{F. D.
Sahneh \& C. Scoglio are with the Department of electrical and Computer
Engineering at Kansas State University, Manhattan, KS, 66502. -E-mail:
\{faryad,caterina\}@ksu.edu}\thanks{F. N. Chowdhury is with the Directorate
for Social, Behavioral \& Economic Sciences, National Science Foundation,
Arlington, VA, USA-E-mail: fchowdhu@nsf.gov}}
\maketitle

\begin{abstract}
In epidemic modeling, the term infection strength indicates the ratio of
infection rate and cure rate. If the infection strength is higher than a
certain threshold -- which we define as the epidemic threshold - then the
epidemic spreads through the population and persists in the long run. For a
single generic graph representing the contact network of the population under
consideration, the epidemic threshold turns out to be equal to the inverse of
the spectral radius of the contact graph. However, in a real world scenario it
is not possible to isolate a population completely: there is always some
interconnection with another network, which partially overlaps with the
contact network. Results for epidemic threshold in interconnected networks are
limited to homogeneous mixing populations and degree distribution arguments.
In this paper, we adopt a spectral approach. We show how the epidemic
threshold in a given network changes as a result of being coupled with another
network with fixed infection strength. In our model, the contact network and
the interconnections are generic. Using bifurcation theory and algebraic graph
theory, we rigorously derive the epidemic threshold in interconnected
networks. These results have implications for the broad field of epidemic
modeling and control. Our analytical results are supported by numerical simulations.

\end{abstract}

\section{Introduction}

In the existing individual-based epidemic models, the interaction and
consequently the infection spreading process is driven by a single graph, the
contact network by which individuals are in physical contact. However, in
order to study epidemics in cyber-physical systems, a more elaborate
description of the interaction is required. Several researchers from computer
science, communication, networking, and control communities are working on
describing this complex interaction by using multiple interconnected networks
\cite{Havlin2011PRE, Brummitt2012PNAS}. The study of the spreading of
epidemics in interconnected networks is a major challenge of complex networks,
which has recently attracted substantial attention \cite{funk2010PRE,
Dickison2012PRE, Mendiola12PRE, Wang2011wicom}.

The N-Intertwined Mean-Field Approximated (NIMFA) model, first proposed by Van
Mieghem \cite{Piet2009TN}, pointed out the specific role of a general network
on the spreading process. Before NIMFA, most network-based epidemic models
considered aggregated networks characterized by a given node degree
distribution \cite{Vespignani2001PRE}. The NIMFA model has triggered a
pervasive amount of research on epidemic spreading on general networks, in
different scenarios and with different compartments \cite{Dorogovtsev2012PRL,
FaryadCDC11SAIS, FaryadCDC12IDN, Mina2011JTB, Ferreira2012arXiv,
preciado2010arXiv}. The key aspect of this class of models relay on the use of
rigorous spectral analysis to determine the evolution of the epidemic. In
particular, it is found that if the infection strength is higher than a
certain threshold -- which is defined as the epidemic threshold
\cite{Piet2009TN}, then the epidemic spreads through the population and
persists in the long run. For a single generic graph representing the contact
network of the population under consideration, the epidemic threshold turns
out to be equal to the inverse of the spectral radius of the contact graph.
The epidemic threshold provides a measure of the network robustness with
respect to epidemics: the larger the epidemic threshold is, the more robust
the network is, since more instances of infection strength will not spread in
the long run.

Current research efforts are directed to establish the impact of
interconnected networks in spreading processes. Interconnection of networks
can only make the system more vulnerable to infection propagation. Therefore,
it is expected that the epidemic threshold of a network is not increased when
it is connected to another network. Results for epidemic threshold in
interconnected networks are limited to homogeneous mixing populations and
degree distribution arguments. In \cite{Dickison2012PRE}, two networks
following the standard configuration model and interconnected with their own
intranetwork are studied. Two possible scenarios are considered:
strongly-coupled networks and weakly coupled networks. In strongly-coupled
epidemics, either the epidemic invades both networks or not spread at all. In
contrast, in weakly-coupled network systems, an intermediate scenario can
happen where an epidemic spreads in one network but does not invade the
coupled network.

The objective of this paper is to study how much the interconnection can
affect the robustness of the network when considering two general networks
with network topologies expressed by their adjacency matrices. In this paper,
we study the spreading process of a susceptible-infected-susceptible (SIS)
type epidemic model in an interconnected network of two general graphs. First,
we prove that interconnection always increases the probability of infection.
Second, we find that the epidemic threshold for a network interconnected to
another network with a given infection strengths rigorously derived as the
spectral radius of a new matrix which accounts for the two networks and their
interconnection links. We make use of algebraic graph theory to demonstrate
our findings.

The main contribution of this paper is the use of spectral analysis to analyze
epidemic spreading in interconnected networks. To the best of our knowledge,
this is the first time such approach has been used. As a result of our
analysis on two general interconnected networks, we show that assumptions on
the level of connectedness are not necessary to determine the epidemic
threshold, and the evolution of the spreading process. Consequently, our
results are rigorous and general, and reproduce results of
\cite{Dickison2012PRE}, as a specific case.

This rest of the paper is organized as follows. Preliminary tools in graph
theory and a background in epidemic modeling is the subject of Section
\ref{Sec: Preliminary}. In Section \ref{Section: Modeling}, SIS epidemic
spreading in two interconnected network is modeled. Main results on the
epidemic threshold are provided in Section \ref{Section: Mainresult}. Finally,
simulation results are available in Section \ref{Section: Simulation}.

\section{Preliminary and Background\label{Sec: Preliminary}}

\subsection{Individual-Based Epidemic Models}

Epidemic modeling has a rich history. Biological epidemiology has produced
significant number of deterministic and stochastic models. These models have
been successful in providing insights and deep understanding of the epidemic
process phenomenon leading to successful conclusions about prevention and
prediction of epidemics. In \cite{bailey1975Book}, a stochastic epidemic model
was studied for a well-mixed homogenous population. However, this assumption
on the population appeared to be too simplistic in order to capture realistic
cases. The theory of random networks was employed to generate models to
represent contact patterns among individuals within a population.
Specifically, results were reported in \cite{Vespignani2002EPJB} for
heterogeneous networks and in \cite{Vespignani2001PRE} for scale free
networks. In the search for detailed and general models, individual-based
epidemic models were proposed, where the contact network is represented by a
graph. In particular, in the NIMFA model \cite{Piet2009TN}, the probability of
infection for each individual are the system states. For this model, the
epidemic threshold is shown to be equal to the inverse of the spectral radius
of the contact graph.

\subsection{Graph Theory}

Graph theory (see \cite{Diestel97Book, Piet2011graphspectra}\textbf{) }is
widely used for representing the contact topology in an epidemic network. Let
$\mathcal{G}=\left\{  \mathcal{V},\mathcal{E}\right\}  $ represent a directed
graph, and $\mathcal{V=}\left\{  1,...,N\right\}  $\ denote the set of
vertices. Every agent is represented by a vertex. The set of edges is denoted
by $\mathcal{E\subset V\times V}$. An edge is an ordered pair $(i,j)\in
\mathcal{E}$ if agent $i$ can potentially be directly infected by agent $j$.
$\mathcal{N}_{i}=\left\{  j\in\mathcal{V\mid}(i,j)\in\mathcal{E}\right\}  $
denotes the neighborhood set of vertex $i$. Graph $\mathcal{G}$\ is said to be
undirected if for any edge $(i,j)\in\mathcal{E}$, edge $(j,i)\in\mathcal{E}$.
In this paper, we assume that there is no self loop in the graph, i.e.,
$(i,i)\notin\mathcal{E}$, and the contact graph is undirected. A path is
referred by the sequence of its vertices. A path $\mathcal{P}$ of length $k$
between $v_{0}$, $v_{k}$\ is the ordered sequence $(v_{0},...,v_{k})$ where
$(v_{i-1},v_{i})\in\mathcal{E}$ for $i=1,...,k$. Graph $\mathcal{G}$\ is
connected if any two vertices are connected with a path in $\mathcal{G}$.
$\mathcal{A=}\left[  a_{ij}\right]  \in%
\mathbb{R}
^{N\times N}$ denotes the adjacency matrix of $\mathcal{G}$, where $a_{ij}=1$
if and only if $(i,j)\in\mathcal{E}$ else $a_{ij}=0$. A graph is connected iff
its associated adjacency matrix is irreducible. The largest eigenvalue of the
adjacency matrix $A$ is called \emph{spectral radius} of $A$ and is denoted by
$\lambda_{1}(A)$.

A network of two interconnected graphs $\mathcal{G}_{1}$ and $\mathcal{G}_{2}
$ is represented by the set of non-overlapping vertices $\{\mathcal{V}%
_{m}\}_{m=1}^{2}$ and the set of edges $\{\mathcal{E}_{mn}\}_{m,n=1}^{2}$,
where $\mathcal{E}_{mn}\mathcal{\subset V}_{m}\mathcal{\times V}_{n}$ denotes
the connection between vertices of $\mathcal{V}_{m}$ to vertices of
$\mathcal{V}_{n}$ for each $m,n\in\{1,2\}$. Connections between vertices of
$\mathcal{V}_{m}$ to vertices of $\mathcal{V}_{n}$ can be represented by
matrices $A_{mn}$, for each $m,n\in\{1,2\}$.

In order to account for the hops of a path between the two interconnected
graphs, we define the following class of paths. Without loss of generality, we
assume that each path starts from $i\in\mathcal{V}_{1}$.

\begin{definition}
A path from node $i\in\mathcal{G}_{1}$ to node $j$ is of class $(l_{1}%
,...,l_{s})$, with non-negative integers $l_{1},...,l_{s}$, if it first make
$l_{1}$ jumps in $\mathcal{G}_{1}$ then goes to $\mathcal{G}_{2}$ and make
$l_{2}$ jumps in $\mathcal{G}_{2}$ then goes back to $\mathcal{G}_{1}$ and
makes $l_{3}$ jumps in $\mathcal{G}_{1}$ and so on until it makes the last
$l_{s}$ jumps to reach $j$.
\end{definition}

It can be inferred from the above definition that a path of class
$(l_{1},...,l_{s})$, has length $L=(s-1)+$ $l_{1}+\cdots+l_{s}$.

\section{Modeling SIS Spreading in Interconnected
Networks\label{Section: Modeling}}

In this paper, we study the spreading process of a
susceptible-infected-susceptible (SIS) type epidemic model in an
interconnected network of two graphs. In order to develop the model for the
case of interconnected network, first we review the model for SIS\ spreading
over a single graph.

\subsection{SIS Epidemic Spreading over a Single Graph\label{SIS_Single}}

Consider a network of $N$ agents where the contact is determined by the
adjacency matrix $A$. Agent $j$ is a neighbor of $i$, denoted by
$j\in\mathcal{N}_{i}$, if it can contract the infection to agent $i$. If $j$
is a neighbor of $i$ then $a_{ij}=1$, otherwise $a_{ij}=0$. In the SIS\ model,
the state $X_{i}(t)$ of an agent $i$ at time $t$ is a Bernoulli random
variable, where $X_{i}(t)=0$ if agent $i$ is susceptible and $X_{i}(t)=1$ if
it is infected. The \emph{curing process} for infected agent $i$ is a Poisson
process with curing rate $\delta\in%
\mathbb{R}
^{+}$. The \emph{infection process} for susceptible agent $i$ in contact with
infected agent $j\neq i$ is a Poisson process with infection rate $\beta\in%
\mathbb{R}
^{+}$. The competing infection processes are independent. Therefore, a
susceptible agent effectively becomes infected with rate $\beta Y_{i}(t)$,
where $Y_{i}(t)\triangleq\sum_{j=1}^{N}a_{ij}X_{j}(t)$ is the number of
infected neighbors of agent $i$ at time $t$. The ratio of the infection rate
$\beta$ over the curing $\delta$ is the \emph{infection strength}
$\tau\triangleq\frac{\beta}{\delta}$. A schematic of SIS\ epidemic spreading
model over a graph is shown in Fig. \ref{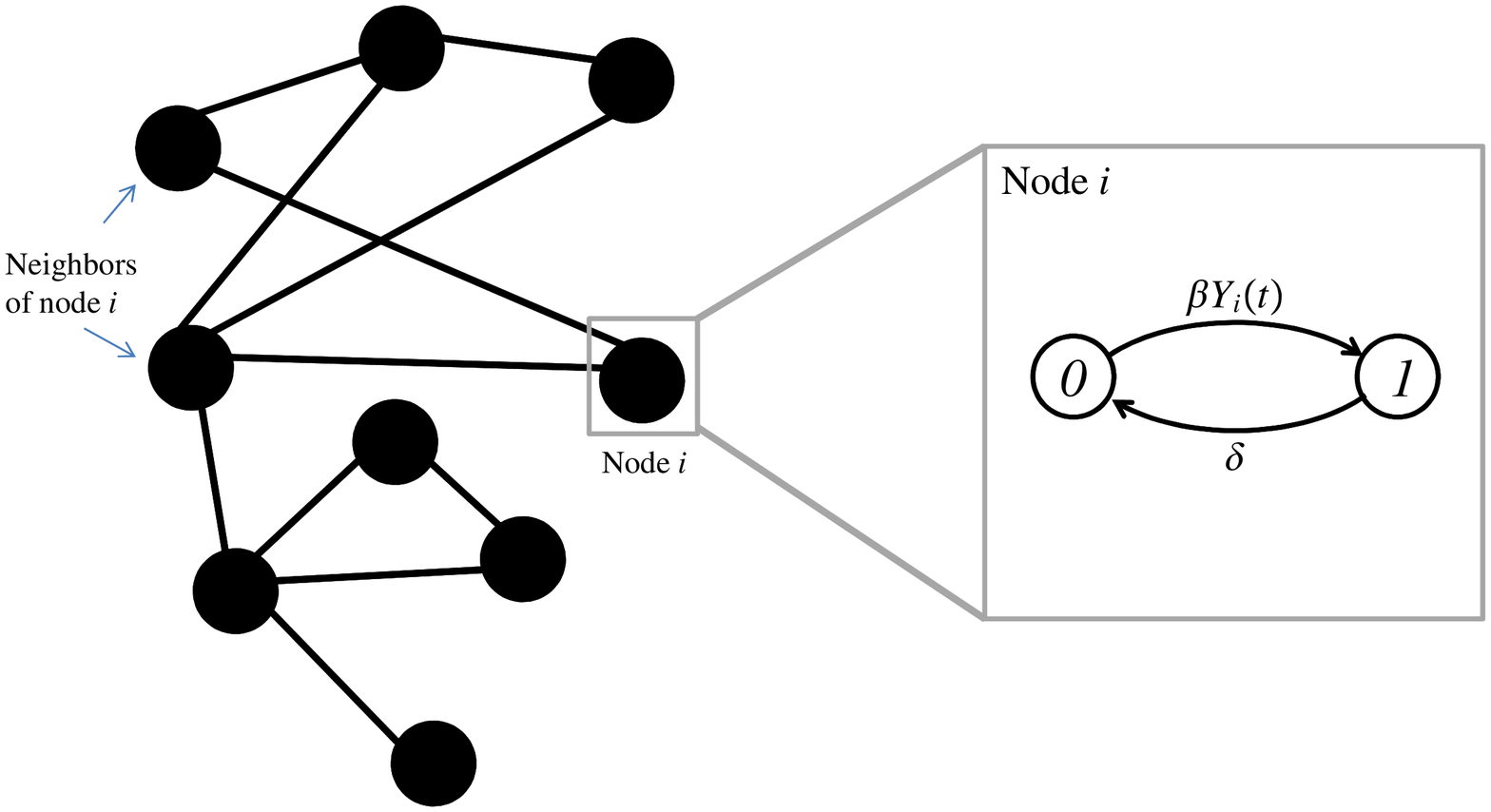}.%

\begin{figure}[ptb]%
\centering
\includegraphics[ width=3.5 in]{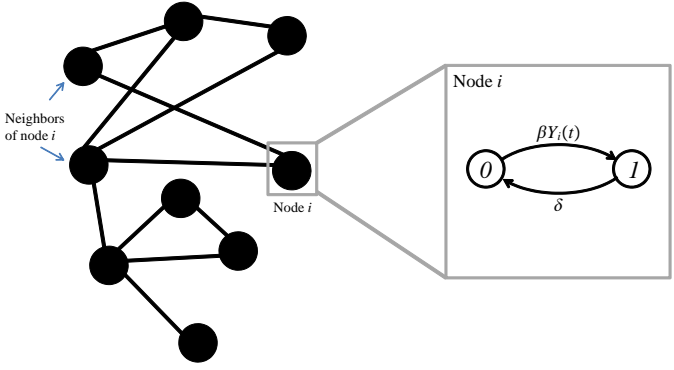}
\caption{Schematics of a contact network
along with the agent-level stochastic transition diagram for agent $i $
according to the SIS epidemic spreading model. The parameters $\beta$ and
$\delta$ denote the infection rate and curing rate, respectively. $Y_{i}(t)$
is the number of the neighbors of agent $i$ that are infected at time $t$.}%
\label{sis.eps}%
\end{figure}

Denote the infection probability of the $i$-th agent by $p_{i}\triangleq
\Pr[X_{i}(t)=1]$. It has been shown that the marginal probabilities $p_{i}$ do
not form a closed system. Actually, the exact Markov set of differential
equations has $2^{N}$ states. Van Mieghem \emph{et. al.} \cite{Piet2009TN} used
a first order mean-field type approximation to develop the NIMFA model, a set
of ordinary differential equations%
\begin{equation}
\dot{p}_{i}=\beta(1-p_{i})\sum_{j=1}^{N}a_{ij}p_{j}-\delta p_{i}%
,~i\in\{1,...,N\},\label{NInt_Model}%
\end{equation}
which represents the time evolution of the infection probability for each
agent. There is no approximation on the network topology in this model.
According to this model, it is proved that if the infection strength
$\tau=\beta/\delta$ is less than the threshold value $\tau_{c}\triangleq
\frac{1}{\lambda_{1}(A)}$, initial infection probabilities die out
exponentially. If the infection strength $\tau=\beta/\delta$ is higher than
$\tau_{c}$, then infection probabilities will go to non-zero steady state values.

\subsection{SIS Epidemic Spreading over Interconnected Networks}

Consider two groups of agents of sizes $N_{1}$ and $N_{2}$. In order to
facilitate the subsequent developments, we label the agents of the first graph
$\mathcal{G}_{1}$ from $1$ to $N_{1}$, and the agents of the second graph
$\mathcal{G}_{2}$ from $N_{1}+1$ to $N_{1}+N_{2}$. The collective adjacency
matrix $A$, defined as
\begin{equation}
A\triangleq%
\begin{bmatrix}
A_{11} & A_{12}\\
A_{21} & A_{22}%
\end{bmatrix}
\in%
\mathbb{R}
^{_{(N_{1}+N_{2})\times(N_{1}+N_{2})}},\label{A_coupled}%
\end{equation}
represents the contact between all of the agents. Since the contact topology
in this paper is undirected, $A_{11}$ and $A_{22}$ are symmetric matrices and
$A_{21}=A_{12}^{T}$. According to definition (\ref{A_coupled}), agent $i$ is
connected to agent $j$ iff $(A)_{ij}=1$. A schematic of the interconnected
contact network of the agents is represented in Fig. \ref{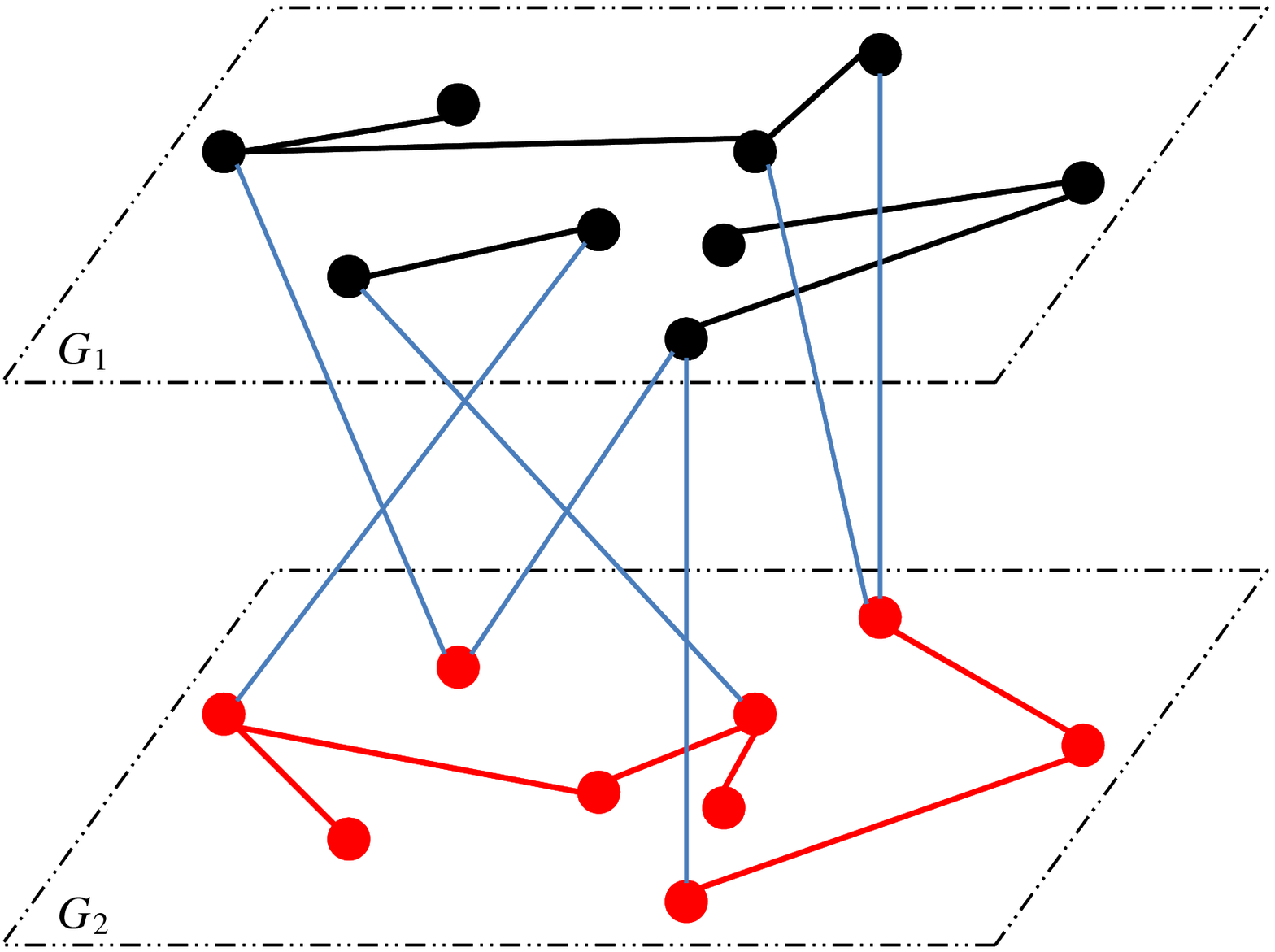}.%

\begin{figure}[ptb]%
\centering
\includegraphics[ width=3.5 in]{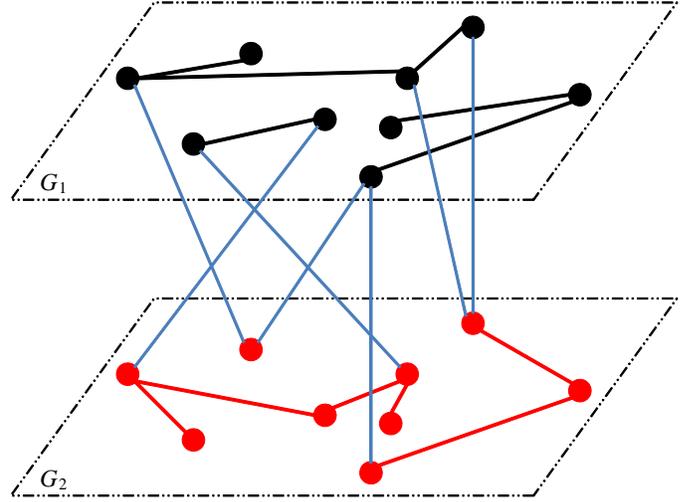}
\caption{A schematic of the coupling between
graphs $\mathcal{G}_{1}$ (in black) and $\mathcal{G}_{2}$ (in red), which each
represent a contact network. The blue links represent the coupling between the
nodes of the two graphs. $\mathcal{G}_{1}$ and $\mathcal{G}_{2}$ are not
necessarily connected. However, the whole interconnected network is
connected.}%
\label{couplednetwork.eps}%
\end{figure}

The SIS\ spreading model over a single graph described in Section
\ref{SIS_Single} can be generalized in the following way. The curing rate for
agents of graphs $\mathcal{G}_{1}$ and $\mathcal{G}_{2}$ are $\delta_{1}\in%
\mathbb{R}
^{+}$ and $\delta_{2}\in%
\mathbb{R}
^{+}$, respectively. The infection rates $\beta_{11},\beta_{12},\beta
_{21},\beta_{22}\in%
\mathbb{R}
^{+}$ are such that a susceptible agent of graph $\mathcal{G}_{m}$ receives
the infection from an infected agent in $\mathcal{G}_{n}$ with the infection
rate $\beta_{mn}$, for $m,n\in\{1,2\}$. Similar to (\ref{NInt_Model}), the
infection probabilities of the agents evolve according to the following set of
differential equations:
\begin{multline}
\dot{p}_{i}=(1-p_{i})\{\beta_{11}\sum_{j=1}^{N_{1}}a_{ij}p_{j}+\beta_{12}%
\sum_{j=N_{1}+1}^{N_{1}+N_{2}}a_{ij}p_{j}\}-\delta_{1}p_{i},\\
i\in\{1,...,N_{1}\},\label{SIS_Coupled_1}%
\end{multline}%
\begin{multline}
\dot{p}_{i}=(1-p_{i})\{\beta_{21}\sum_{j=1}^{N_{1}}a_{ij}p_{j}+\beta_{22}%
\sum_{j=N_{1}+1}^{N_{1}+N_{2}}a_{ij}p_{j}\}-\delta_{2}p_{j},\\
~i\in\{N_{1}+1,...,N_{1}+N_{2}\}.\label{SIS_Coupled_2}%
\end{multline}

\begin{remark}
Infection process is the result of interaction between a pair of agents.
Therefore, actually the infection rate in (\ref{NInt_Model}) equals to
$\beta=\mu\pi$ where $\mu\in%
\mathbb{R}
^{+}$ is the rate that an infected agents transmits the infection and $\pi
\in\lbrack0,1]$ is the probability that a susceptible agent receives a
transmitted infection. Similar arguments show that the four infection rates
$\beta_{11},\beta_{12},\beta_{21},\beta_{22}$ in (\ref{SIS_Coupled_1}) and
(\ref{SIS_Coupled_2}) are not completely independent of each other. Having
$\beta_{11}=\mu_{1}\pi_{1}$ and $\beta_{22}=\mu_{2}\pi_{2}$, the infection
rates $\beta_{12}$ and $\beta_{21}$ will have a form of $\beta_{12}=\alpha
\mu_{1}\pi_{2},\beta_{21}=\alpha\mu_{2}\pi_{1}$, where $\alpha\in%
\mathbb{R}
^{+}$ is a positive scalar accounting for heterogeneity of interconnection and
interaconnection. Therefore, the following constraint exists among the
infection rates
\end{remark}%

\begin{equation}
\beta_{11}\beta_{22}=\alpha^{2}\beta_{12}\beta_{21}.\label{beta_constraint}%
\end{equation}

Comparing (\ref{NInt_Model}) and (\ref{SIS_Coupled_1}), it can be concluded
that interconnection increases the probability of infection. This conclusion
is actually intuitive: when interconnected with other agents, there is more
possibility to receive the infection.

\section{Main Results\label{Section: Mainresult}}

\subsection{Problem Statement}

Suppose that agents of graph $\mathcal{G}_{1}$ are connected to agents of
graph $\mathcal{G}_{2}$, and the overall contact among the agents is
determined by $A$ defined in (\ref{A_coupled}), where the following assumption
holds for $\tau_{22}$.

\begin{condition}
\label{Ass: b22d2}If there is no interconnection, infection cannot survive in
$\mathcal{G}_{2}$, i.e.,%
\begin{equation}
\frac{\beta_{22}}{\delta_{2}}<\frac{1}{\lambda_{1}(A_{22})}.
\end{equation}

\end{condition}

Under Assumption \ref{Ass: b22d2}, for the infection strength $\tau_{11}=0$,
the steady state value of infection probabilities of (\ref{SIS_Coupled_1}) and
(\ref{SIS_Coupled_2}) are necessarily zero for each agent. In this paper, we
find a threshold value $\tau_{11,c}$ such that for infection strength
$\tau_{11}>\tau_{11,c}$ the steady state infection probabilities take positive
values. Since interconnection always increases the chance of receiving the
infection, we expect $\tau_{11,c}<1/\lambda_{1}(A_{11})$.

\subsection{Equation for Epidemic Threshold}

We use bifurcation theory to find the epidemic threshold. From
(\ref{SIS_Coupled_1}) and (\ref{SIS_Coupled_2}), the equilibrium points of the
infection probabilities satisfy the following set of algebraic equations%
\begin{multline}
\frac{p_{i}^{\ast}}{1-p_{i}^{\ast}}=\tau_{11}\sum_{j=1}^{N_{1}}a_{ij}%
p_{j}^{\ast}+\tau_{12}\sum_{j=N_{1}+1}^{N_{1}+N_{2}}a_{ij}p_{j}^{\ast},\\
i\in\{1,...,N_{1}\},\label{pi_ss_1}%
\end{multline}%
\begin{multline}
\frac{p_{i}^{\ast}}{1-p_{i}^{\ast}}=\tau_{21}\sum_{j=1}^{N_{1}}a_{ij}%
p_{j}^{\ast}+\tau_{22}\sum_{j=N_{1}+1}^{N_{1}+N_{2}}a_{ij}p_{j}^{\ast},\\
~i\in\{N_{1}+1,...,N_{1}+N_{2}\},\label{pi_ss_2}%
\end{multline}
where,%
\begin{equation}
\tau_{11}\triangleq\frac{\beta_{11}}{\delta_{1}},\tau_{12}\triangleq
\frac{\beta_{12}}{\delta_{1}},\tau_{21}\triangleq\frac{\beta_{21}}{\delta_{2}%
},\tau_{22}\triangleq\frac{\beta_{22}}{\delta_{2}}.\label{taws}%
\end{equation}

\begin{lemma}
\label{Lemma: SS}If the overall contact network is connected, the steady state
values of the infection probabilities are either zero for all of the agents or
absolutely positive for each agent.
\end{lemma}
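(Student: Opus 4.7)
The plan is to recast the statement as a claim about the zero-set $Z\triangleq\{i\in\mathcal{V}_1\cup\mathcal{V}_2 : p_i^\ast=0\}$, showing that $Z$ must be a union of connected components of the overall contact graph, and then invoking connectedness to force $Z=\emptyset$ or $Z=\mathcal{V}_1\cup\mathcal{V}_2$.

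First I would verify that each $p_i^\ast$ lies in $[0,1)$, so that the fraction $p_i^\ast/(1-p_i^\ast)$ in (\ref{pi_ss_1})-(\ref{pi_ss_2}) is well-defined and vanishes iff $p_i^\ast=0$. Non-negativity is standard invariance of the unit box under (\ref{SIS_Coupled_1})-(\ref{SIS_Coupled_2}): on the face $p_i=0$ one has $\dot p_i\geq 0$, and on the face $p_i=1$ one has $\dot p_i=-\delta_m<0$, which rules out $p_i^\ast=1$. Next comes the key step. Suppose $i\in Z$. Then the left-hand side of the appropriate steady-state equation is zero, hence so is the right-hand side. Because the four constants $\tau_{mn}$ in (\ref{taws}) are all strictly positive and every summand $a_{ij}p_j^\ast$ is non-negative, each summand must individually vanish, which means $p_j^\ast=0$ for every neighbor $j$ of $i$ in the overall graph defined by $A$ in (\ref{A_coupled}). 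In other words, $Z$ is closed under taking neighbors.

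Finally, connectedness closes the argument: if $Z$ were a proper non-empty subset of $\mathcal{V}_1\cup\mathcal{V}_2$, then by connectedness of $A$ there would exist an edge between some $i\in Z$ and some $j\notin Z$, contradicting the neighbor-closure just established. Therefore $Z=\emptyset$, in which case $p_i^\ast>0$ for every $i$, or $Z=\mathcal{V}_1\cup\mathcal{V}_2$, in which case every $p_i^\ast=0$. The only subtlety is the well-definedness of $p_i^\ast/(1-p_i^\ast)$, which is dispatched by the invariance observation; the rest is a one-line propagation along any path joining a zero vertex to a positive vertex, and no genuine analytic obstacle arises. Note also that the argument uses only positivity of the $\tau_{mn}$ and irreducibility of the block matrix $A$, and does not rely on individual irreducibility of $A_{11}$ or $A_{22}$, consistent with Fig.~\ref{couplednetwork.eps}.
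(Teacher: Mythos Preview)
Your proof is correct and follows essentially the same approach as the paper: both arguments propagate along paths in the connected graph using the steady-state equations (\ref{pi_ss_1})--(\ref{pi_ss_2}), with the only cosmetic difference that you show the zero-set $Z$ is neighbor-closed while the paper shows the positive set is neighbor-closed, which are contrapositive formulations of the same step. Your added remark that $p_i^\ast\in[0,1)$ to justify well-definedness of the fraction is a point the paper takes for granted.
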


\begin{proof}
The idea of the proof is inspired from \cite{Piet2009TN}. The steady state
values for the infection satisfies (\ref{pi_ss_1}) and (\ref{pi_ss_2}).
Therefore, $p_{i}^{\ast}=0$ for $\forall i\in\{1,...,N_{1}+N_{2}\}$ is a
solution for the steady state infection probabilities. Suppose there exists a
node $j$ such that $p_{j}^{\ast}>0$. According to (\ref{pi_ss_1}) and
(\ref{pi_ss_2}), for any node $i$ that is a neighbor of node $j$, i.e.,
$a_{ij}\neq0$, the steady state infection probability is%
\begin{equation}
p_{i}^{\ast}=\frac{\tau_{11}\sum_{j=1}^{N_{1}}a_{ij}p_{j}^{\ast}+\tau_{12}%
\sum_{j=N_{1}+1}^{N_{1}+N_{2}}a_{ij}p_{j}^{\ast}}{1+\tau_{11}\sum_{j=1}%
^{N_{1}}a_{ij}p_{j}^{\ast}+\tau_{12}\sum_{j=N_{1}+1}^{N_{1}+N_{2}}a_{ij}%
p_{j}^{\ast}},
\end{equation}
if $i\in\{1,...,N_{1}\}$ and%
\[
p_{i}^{\ast}=\frac{\tau_{21}\sum_{j=1}^{N_{1}}a_{ij}p_{j}^{\ast}+\tau_{22}%
\sum_{j=N_{1}+1}^{N_{1}+N_{2}}a_{ij}p_{j}^{\ast}}{1+\tau_{21}\sum_{j=1}%
^{N_{1}}a_{ij}p_{j}^{\ast}+\tau_{22}\sum_{j=N_{1}+1}^{N_{1}+N_{2}}a_{ij}%
p_{j}^{\ast}},
\]
if $i\in\{N_{1}+1,...,N_{1}+N_{2}\}$, which is positive because $\sum
_{j=1}^{N_{1}}a_{ij}p_{j}^{\ast}>0$ or $\sum_{j=N_{1}+1}^{N_{1}+N_{2}}%
a_{ij}p_{j}^{\ast}>0$. Same procedure can be applied to the neighbors of node
$i$, and so on. Hence, if the contact network is connected and at least one of
the agents have nonzero infection probability, then $p_{i}^{\ast}>0$ for all
$i\in\{1,...,N_{1}+N_{2}\}$.
\end{proof}

Before the epidemic threshold, origin is the only solution to (\ref{pi_ss_1})
and (\ref{pi_ss_2}). Epidemic threshold is the critical value $\tau_{11,c} $
such that a second equilibrium point starts leaving the origin. A corollary of
Lemma \ref{Lemma: SS} is that the epidemic threshold $\tau_{11,c}$ is such
that $p_{i}^{\ast}=0$ and $\frac{\partial p_{i}^{\ast}}{\partial\tau_{11}}>0$
for every $i\in\{1,...,N_{1}+N_{2}\}$. Taking the derivative of (\ref{pi_ss_1}%
) and (\ref{pi_ss_2}) with respect to $\tau_{11} $ at $\tau_{11}=\tau_{11,c}$
and $p_{i}^{\ast}=0$ yields
\begin{multline}
\frac{\partial p_{i}^{\ast}}{\partial\tau_{11}}=\tau_{11,c}\sum_{j=1}^{N_{1}%
}a_{ij}\frac{\partial p_{j}^{\ast}}{\partial\tau_{11}}+\tau_{12}\sum
_{j=N_{1}+1}^{N_{1}+N_{2}}a_{ij}\frac{\partial p_{j}^{\ast}}{\partial\tau
_{11}},\\
i\in\{1,...,N_{1}\},\label{dp_1}%
\end{multline}%
\begin{multline}
\frac{\partial p_{i}^{\ast}}{\partial\tau_{11}}=\tau_{21}\sum_{j=1}^{N_{1}%
}a_{ij}\frac{\partial p_{j}^{\ast}}{\partial\tau_{11}}+\tau_{22}\sum
_{j=N_{1}+1}^{N_{1}+N_{2}}a_{ij}\frac{\partial p_{j}^{\ast}}{\partial\tau
_{11}},\\
i\in\{N_{1}+1,...,N_{1}+N_{2}\}.\label{dp_2}%
\end{multline}

Defining $V_{1}\triangleq\lbrack\frac{\partial p_{1}^{\ast}}{\partial\tau
_{11}},...,\frac{\partial p_{N_{1}}^{\ast}}{\partial\tau_{11}}]^{T}$ and
$V_{2}\triangleq\lbrack\frac{\partial p_{N_{1}+1}^{\ast}}{\partial\tau_{11}%
},...,\frac{\partial p_{N_{1}+N_{2}}^{\ast}}{\partial\tau_{11}}]^{T}$, the
equations (\ref{dp_1}) and (\ref{dp_2}) can be equivalently expressed in the
collective form as%
\begin{equation}%
\begin{bmatrix}
\tau_{11,c}A_{11} & \tau_{12}A_{12}\\
\tau_{21}A_{12}^{T} & \tau_{22}A_{22}%
\end{bmatrix}%
\genfrac{[}{]}{0pt}{}{V_{1}}{V_{2}}%
=%
\genfrac{[}{]}{0pt}{}{V_{1}}{V_{2}}%
.\label{Theshold_Eq_Coupled}%
\end{equation}
The critical value of the infection strengths are those for which the above
equation has a positive solution. Equation (\ref{Theshold_Eq_Coupled}) can be
written as
\begin{align}
\tau_{11,c}A_{11}V_{1}+\tau_{12}A_{12}V_{2}  & =V_{1},\\
\tau_{21}A_{12}^{T}V_{1}+\tau_{22}A_{22}V_{2}  & =V_{2}.
\end{align}
According to Assumption \ref{Ass: b22d2}, if $V_{1}$ is positive then
$V_{2}=\tau_{21}(I-\tau_{22}A_{22})^{-1}A_{12}^{T}V_{1}$ exists and is
non-negative. Therefore, (\ref{Theshold_Eq_Coupled}) is equivalently expressed
as%
\begin{equation}
HV_{1}=V_{1}\label{HVV}%
\end{equation}
where $H$ is defined as%
\begin{equation}
H\triangleq\tau_{11,c}A_{11}+\tau_{21}\tau_{12}A_{12}(I-\tau_{22}A_{22}%
)^{-1}A_{12}^{T}.\label{H}%
\end{equation}

\subsection{Effect of Coupling on Epidemic Threshold}

The rest of the analysis is to find the threshold value $\tau_{11,c}$ such
that (\ref{HVV}) has a positive solution for $V_{1}$. The following results
facilitate the proof of Theorem \ref{Theorem: tawc11}, which is the main
result in this paper.

\begin{lemma}
\label{Paths}The number of paths of length $L$ from node $i\in\mathcal{G}_{1}$
to node $j$ corresponding to the class $(l_{1},...,l_{s})$ is:

\begin{itemize}
\item the $(i,j)$-th entry of $A_{11}^{l_{1}}A_{12}A_{12}^{l_{2}}A_{21}\cdots
A_{21}A_{1}^{l_{s}}$, if $j\in\{1,...,N_{1}\}$,

\item the $(i,j-N_{1})$-th entry of $A_{11}^{l_{1}}A_{12}A_{12}^{l_{2}}%
A_{21}\cdots A_{12}A_{2}^{l_{s}}$, if $j\in\{N_{1}+1,...,N_{1}+N_{2}\}$,
\end{itemize}

where $A_{11}^{0}=I_{N_{1}\times N_{1}}$ and $A_{22}^{0}=I_{N_{2}\times N_{2}%
}$, by convention.
\end{lemma}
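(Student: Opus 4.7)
The plan is to prove this by induction on $s$, leveraging the classical fact that the $(i,j)$ entry of the $k$-th power of an adjacency matrix counts walks of length $k$ between $i$ and $j$, together with the combinatorial interpretation of matrix multiplication.

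First, I would verify the base case $s=1$. A path of class $(l_1)$ from $i\in\mathcal{G}_1$ never crosses between graphs, so it must terminate at a node $j\in\mathcal{G}_1$, and its count is the number of walks of length $l_1$ in $\mathcal{G}_1$ from $i$ to $j$. By the standard walk-counting result applied to the adjacency matrix $A_{11}$, this equals $(A_{11}^{l_1})_{ij}$, matching the claim. I would also check the $s=1$, $l_1=0$ convention, which gives $(I_{N_1\times N_1})_{ij} = \delta_{ij}$ as desired (the empty walk from $i$ to $i$).

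Next, I would carry out the inductive step. Assume the claim holds for all classes of length $s-1$. A path of class $(l_1,\ldots,l_s)$ from $i\in\mathcal{G}_1$ to $j$ admits a unique decomposition: it consists of a prefix that is a path of class $(l_1,\ldots,l_{s-1})$ from $i$ to some intermediate node $k$ in the graph \emph{opposite} to the one containing $j$, followed by exactly one interconnection edge from $k$ to some node $k'$ in the graph containing $j$, followed by $l_s$ steps staying in that graph ending at $j$. Summing the number of such decompositions over all admissible $(k,k')$ is exactly the matrix-multiplication formula: the prefix contributes the $(i,k)$ entry of the inductive product, the crossing edge contributes the $(k,k')$ entry of $A_{12}$ or $A_{21}$ (according to which direction the crossing goes), and the suffix contributes the $(k',j)$ entry of $A_{11}^{l_s}$ or $A_{22}^{l_s}$. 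The composition law for matrix multiplication then gives the claimed product.

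The main obstacle is really just bookkeeping: keeping track of parity. Because the path alternates between the two graphs, the parity of $s$ determines whether $j$ lies in $\mathcal{G}_1$ or $\mathcal{G}_2$, and correspondingly determines whether the last crossing matrix appearing in the product is $A_{12}$ or $A_{21}=A_{12}^T$ and whether the final block is $A_{11}^{l_s}$ or $A_{22}^{l_s}$. I would therefore split the inductive step into two parallel cases according to the target graph of $j$, and verify that the $j$-index offset (using $j$ vs.\ $j-N_1$) is consistent with the block-matrix indexing. Finally, the total length check $L=(s-1)+l_1+\cdots+l_s$ is immediate from the decomposition, since the $(s-1)$ crossing edges are accounted for separately from the $l_k$ intra-graph jumps.
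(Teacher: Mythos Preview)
Your proposal is correct, but it organizes the induction differently from the paper. The paper inducts on the total path length $L$: for the step from $L_0$ to $L_0+1$ it peels off the \emph{last single edge}, distinguishing whether that edge stays in the current graph (so the class $(l_1,\ldots,l_s)$ becomes $(l_1,\ldots,l_s+1)$) or crosses (so the class becomes $(l_1,\ldots,l_s,0)$). You instead induct on the number of segments $s$, peeling off the entire final segment---one crossing edge together with the full intra-graph walk of length $l_s$---in one shot. Your decomposition mirrors the block structure of the matrix product more directly, since each inductive step appends exactly one crossing factor and one adjacency-matrix power; the paper's edge-by-edge induction is closer in spirit to the standard ``$(A^k)_{ij}$ counts walks'' argument but needs the extra case split at every step. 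Both routes are equally valid and yield the same bookkeeping about parity of $s$ and the $j$ versus $j-N_1$ indexing that you already flagged.
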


\begin{proof}
We use induction for the proof. For $L=1$, the number of paths from node $i$
to $j$ is equal to $1$ if $i$ is connected to $j$, and is zero otherwise. If
$j\in\{1,...,N_{1}\}$, path of length $L=1$ corresponds to the class $(1)$.
Therefore, the number of paths from node $i$ to $j$ is equal to the $(i,j)$-th
entry of $A_{11}$. If $j\in\{N_{1}+1,...,N_{1}+N_{2}\}$, then a path of length
$L=1$ corresponds to the class either $(0,0)$. In this case, the number of
paths from node $i$ to $j$ is equal to the $(i,j-N_{1})$-th entry of
$A_{12}=A_{1}^{0}A_{12}A_{2}^{0}$. Therefore for $L=1$, the Lemma is correct.

Assume that for $L=L_{0}$ the lemma statement is correct. Consider the first
case where $j\in\{1,...,N_{1}\}$. A path of length $L=L_{0}+1$ from $i$ to $j$
is either of the class $(l_{1},...,l_{s}+1)$ or $(l_{1},...,l_{s},0)$. Such a
path can be constructed from paths of length $L_{0}$ from $i$ to $k$ of the
class $(l_{1},...,l_{s})$ then connected to node $j$ from node $k$.

If the path from $i$ to $j$ is of class $(l_{1},...,l_{s}+1)$, then the number
of such paths is%
\begin{multline*}
\sum_{k=1}^{N_{1}}(A_{11}^{l_{1}}A_{12}A_{12}^{l_{2}}A_{21}\cdots A_{21}%
A_{1}^{l_{s}})_{ik}(A_{1})_{kj}=\\
A_{11}^{l_{1}}A_{12}A_{12}^{l_{2}}A_{21}\cdots A_{21}A_{1}^{l_{s}+1}.
\end{multline*}
If the path from $i$ to $j$ is of class $(l_{1},...,l_{s},0)$, then the number
of such paths is%
\begin{multline*}
\sum_{k=N_{1}+1}^{N_{1}+N_{2}}(A_{11}^{l_{1}}A_{12}A_{12}^{l_{2}}A_{21}\cdots
A_{21}A_{1}^{l_{s}})_{i(k-N_{1})}(A_{12})_{(k-N_{1})j}\\
=A_{11}^{l_{1}}A_{12}A_{12}^{l_{2}}A_{21}\cdots A_{21}A_{1}^{l_{s}}A_{12}\\
=A_{11}^{l_{1}}A_{12}A_{12}^{l_{2}}A_{21}\cdots A_{21}A_{1}^{l_{s}}A_{12}%
A_{2}^{0}.
\end{multline*}
Hence, the theorem statement is correct for $L=L_{0}+1$ and $j\in
\{1,...,N_{1}\}$. Similar procedure can be followed to conclude the same
result for $j\in\{N_{1}+1,...,N_{1}+N_{2}\}$.
\end{proof}

\begin{theorem}
\label{The: HT}The matrix $H_{T}$ defined as%
\begin{equation}
H_{T}\triangleq A_{11}+\alpha^{2}\tau_{22}A_{12}(I-\tau_{22}A_{22})^{-1}%
A_{12}^{T}.\label{HT}%
\end{equation}
is irreducible if the coupled network is connected.
\end{theorem}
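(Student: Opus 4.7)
The plan is to recognize that $H_T$ is symmetric and entrywise non-negative, so its irreducibility is equivalent to the connectedness of its associated undirected graph on the vertex set $\mathcal{V}_1$. Symmetry holds because $A_{11}$ and $A_{22}$ are symmetric (hence so is $(I-\tau_{22}A_{22})^{-1}$) and $A_{21}=A_{12}^{T}$. The task then reduces to showing that for every pair $i,j\in\mathcal{V}_1$ there is a walk $v_0=i,v_1,\ldots,v_m=j$ in $\mathcal{V}_1$ with $(H_T)_{v_l v_{l+1}}>0$ for each $l$.

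The key tool is a Neumann-series expansion. Since Assumption 1 gives $\tau_{22}\lambda_{1}(A_{22})<1$ and $A_{22}$ is a non-negative adjacency matrix, the series
\[
(I-\tau_{22}A_{22})^{-1}=\sum_{k=0}^{\infty}\tau_{22}^{k}A_{22}^{k}
\]
converges entrywise, so
\[
H_T=A_{11}+\alpha^{2}\sum_{k=0}^{\infty}\tau_{22}^{k+1}\,A_{12}A_{22}^{k}A_{12}^{T}.
\]
Each summand is non-negative, so $H_T$ is non-negative, and $(H_T)_{ij}>0$ iff either $(A_{11})_{ij}>0$ or $(A_{12}A_{22}^{k}A_{12}^{T})_{ij}>0$ for some $k\ge 0$. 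By Lemma \ref{Paths}, $(A_{12}A_{22}^{k}A_{12}^{T})_{ij}$ counts paths of class $(0,k,0)$ from $i$ to $j$, i.e., walks that leave $\mathcal{G}_1$ immediately, take $k$ steps inside $\mathcal{G}_2$, and return to $\mathcal{G}_1$ landing at $j$. Hence $(i,j)$ is an edge of the graph of $H_T$ iff $i$ and $j$ are adjacent in $\mathcal{G}_1$, or they are joined in the coupled network by a walk whose intermediate vertices all lie in $\mathcal{V}_2$.

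Given any $i,j\in\mathcal{V}_1$, connectedness of the coupled network yields a walk $i=u_{0},u_{1},\ldots,u_{L}=j$ in the full graph; extract from it the subsequence $v_{0}=i,v_{1},\ldots,v_{m}=j$ consisting of its $\mathcal{V}_1$-entries, in order. Between consecutive $v_l$ and $v_{l+1}$, the sub-walk is either a single edge in $\mathcal{G}_1$ (contributing to $(A_{11})_{v_l v_{l+1}}>0$) or a sub-walk whose intermediate vertices all lie in $\mathcal{V}_2$ (contributing to $(A_{12}A_{22}^{k}A_{12}^{T})_{v_l v_{l+1}}>0$ for the appropriate $k$); either way, $(H_T)_{v_l v_{l+1}}>0$. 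After deleting any trivial repetitions $v_l=v_{l+1}$, this becomes a genuine walk from $i$ to $j$ in the graph of $H_T$, proving the graph is connected and hence $H_T$ is irreducible. The only delicate point I anticipate is the path-counting dictionary in the second step, namely verifying that Lemma \ref{Paths} indeed identifies the non-zero pattern of $A_{12}A_{22}^{k}A_{12}^{T}$ with class-$(0,k,0)$ walks; once this identification is in place, the subsequence-extraction argument is essentially combinatorial bookkeeping.
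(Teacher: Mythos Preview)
Your proposal is correct and follows essentially the same approach as the paper: both arguments expand $(I-\tau_{22}A_{22})^{-1}$ as a Neumann series, invoke Lemma~\ref{Paths} to read off that the $(i,j)$-entry of $A_{12}A_{22}^{k}A_{12}^{T}$ counts class-$(0,k,0)$ walks, and then, given $i,j\in\mathcal{V}_{1}$, take a path in the coupled network and use its $\mathcal{V}_{1}$-subsequence to exhibit an $H_{T}$-walk from $i$ to $j$. The only cosmetic difference is that the paper first establishes irreducibility of the finite truncation $\bar{H}_{T}=A_{11}+\sum_{k=0}^{N_{2}-1}A_{12}A_{22}^{k}A_{12}^{T}$ and then passes to $H_{T}$, whereas you work directly with the full series; the underlying combinatorics is identical.
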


\begin{proof}
We show that%
\begin{equation}
\bar{H}_{T}\triangleq A_{11}+A_{12}A_{12}^{T}+\sum_{k=1}^{N_{2}-1}A_{12}%
A_{22}^{k}A_{12}^{T}%
\end{equation}
is irreducible. If $\bar{H}_{T}$ is shown to be irreducible, then
$A_{11}+\alpha^{2}\tau_{22}A_{12}A_{12}^{T}+\alpha^{2}\tau_{22}\sum
_{k=1}^{N_{2}-1}\tau_{22}^{k}A_{12}A_{2}^{k}A_{12}^{T}$ is irreducible. And
hence, $H_{T}=A_{11}+\alpha^{2}\tau_{22}A_{12}A_{12}^{T}+\alpha^{2}\tau
_{22}\sum_{k=1}^{\infty}\tau_{22}^{k}A_{12}A_{2}^{k}A_{12}^{T}=A_{11}%
+\alpha^{2}\tau_{22}A_{12}(I-\tau_{22}A_{22})^{-1}A_{12}^{T}$ is irreducible
and the proof is completed.

If $\mathcal{G}_{1}$ is a connected graph, then $A_{11}$ and as consequence
$\bar{H}_{T}$ is irreducible. Assume that $A_{11}$ does not represent a
connected graph. Therefore, there exists a pair $i$, $j$ such that there is no
path between them in $\mathcal{G}_{1}$. However, since the whole
interconnected network is connected, there exists a path from $i$ to $j$.
Suppose, the path is of class $(l_{1,1},l_{2,1},l_{1,2},l_{2,2},....,l_{2,s}%
,l_{1,s+1})$, i.e., it makes $l_{1,1}$ jumps in $\mathcal{G}_{1}$ to reach
vertex $k_{1}^{out}$, then it leaves $\mathcal{G}_{1}$ and enters
$\mathcal{G}_{2}$ and makes $l_{2,1}$ jumps in $\mathcal{G}_{2}$, then enters
$\mathcal{G}_{1}$ at vertex $k_{1}^{in}$. This process goes on until it makes
$l_{1,s+1}$ jumps in $\mathcal{G}_{1}$ from $k_{s}^{in}$ to reach vertex $j$.
Matrix $\bar{H}_{T}$ is proved to be irreducible if we show that $(k_{u}%
^{out},k_{u}^{in})$-th entry of $\bar{H}_{T}$ is positive for $u=1,...,s$.
Since, there is path from $k_{u}^{out}$ to $k_{u}^{in}$ which is of the class
$(0,l_{2,u},0)$, the $(k_{u}^{out},k_{u}^{in})$-th entry of $A_{12}%
A_{22}^{l_{2,u}}A_{12}^{T}\geq1$,because it is the number of such paths
according to Lemma \ref{Paths}. As a consequence, $(k_{u}^{out},k_{u}^{in}%
)$-th entry of $\bar{H}_{T}$ is positive and therefore $\bar{H}_{T}$ is
irreducible. Hence, the proof is completed.
\end{proof}

\begin{theorem}
\label{Theorem: tawc11}For
\begin{equation}
\tau_{11,c}=\dfrac{1}{\lambda_{1}(H_{T})},\label{taw11c}%
\end{equation}
where $H_{T}$ is defined in (\ref{HT}), the equation
(\ref{Theshold_Eq_Coupled}) has positive solution for $V_{1}$ and $V_{2}$.
That is $\tau_{11,c}$ is the epidemic threshold.
\end{theorem}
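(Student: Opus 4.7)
The plan is to reduce the block fixed-point system \eqref{Theshold_Eq_Coupled} to a single standard eigenvalue problem for $H_{T}$ and then invoke Perron--Frobenius. First I would use the heterogeneity constraint \eqref{beta_constraint} among the infection rates, which relates $\tau_{12}\tau_{21}$ to $\tau_{11,c}\tau_{22}$ by a factor of $\alpha^{2}$ matching the coefficient appearing in the definition \eqref{HT} of $H_{T}$. Substituting at $\tau_{11}=\tau_{11,c}$ into the matrix \eqref{H}, this produces the factorization $H = \tau_{11,c}H_{T}$. The fixed-point problem \eqref{HVV} then becomes
\[
H_{T}V_{1} \;=\; \frac{1}{\tau_{11,c}}\,V_{1},
\]
so the task reduces to exhibiting a strictly positive eigenvector of $H_{T}$ and identifying the associated eigenvalue.

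Next I would check that $H_{T}$ is entrywise non-negative. Assumption \ref{Ass: b22d2} guarantees $\tau_{22}\lambda_{1}(A_{22})<1$, so the Neumann series $(I-\tau_{22}A_{22})^{-1}=\sum_{k\ge 0}\tau_{22}^{k}A_{22}^{k}$ converges and is entrywise non-negative. Since $A_{11}$ and $A_{12}$ are non-negative, so is $H_{T}$. Combined with irreducibility (Theorem \ref{The: HT}), the Perron--Frobenius theorem applies: $\lambda_{1}(H_{T})>0$ is a simple eigenvalue with a strictly positive eigenvector, and no other eigenvalue of $H_{T}$ possesses an entrywise positive eigenvector. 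Therefore $\tau_{11,c}=1/\lambda_{1}(H_{T})$ is the \emph{unique} choice for which \eqref{HVV} admits a positive $V_{1}$, and $V_{1}$ may be taken as the Perron eigenvector.

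It remains to verify $V_{2}>0$. From the second block row of \eqref{Theshold_Eq_Coupled} one has $V_{2}=\tau_{21}(I-\tau_{22}A_{22})^{-1}A_{12}^{T}V_{1} = \tau_{21}\sum_{k\ge 0}\tau_{22}^{k}A_{22}^{k}A_{12}^{T}V_{1}$, which is already non-negative as a product of non-negative matrices and a positive vector. For strict positivity I would reuse the path-counting idea from the proof of Theorem \ref{The: HT}: for every $i\in\mathcal{V}_{2}$, connectedness of the coupled graph guarantees a vertex $j\in\mathcal{V}_{1}$ joined to $i$ by a path of class $(0,k,0)$ for some $k\ge 0$, and by Lemma \ref{Paths} this contributes a strictly positive entry to $V_{2}$ at position $i$.

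The main obstacle I anticipate is the bookkeeping in the first step rather than any deep analytic issue: matching the $\alpha^{2}\tau_{22}$ factor in $H_{T}$ with the cross-rate product $\tau_{12}\tau_{21}$ in $H$ is the only place where the heterogeneity constraint \eqref{beta_constraint} is actually used, and if that constraint is misapplied the clean factorization $H=\tau_{11,c}H_{T}$ collapses and the Perron--Frobenius machinery can no longer be invoked on a single matrix. Once this algebra is pinned down, the rest is a textbook Perron--Frobenius argument together with a connectedness argument in the spirit of Lemma \ref{Lemma: SS}.
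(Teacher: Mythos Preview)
Your proposal is correct and follows essentially the same route as the paper: use the constraint \eqref{beta_constraint} to factor $H=\tau_{11,c}H_{T}$, then apply Perron--Frobenius to the irreducible non-negative matrix $H_{T}$ (Theorem~\ref{The: HT}) to identify $\tau_{11,c}=1/\lambda_{1}(H_{T})$ as the unique value yielding a positive $V_{1}$. You are in fact slightly more careful than the paper, since you explicitly verify entrywise non-negativity of $H_{T}$ via the Neumann series and strict positivity of $V_{2}$ via connectedness, whereas the paper only records $V_{2}\ge 0$.
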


\begin{proof}
According to (\ref{beta_constraint}) and the definitions (\ref{taws}), we have%
\begin{equation}
\tau_{21}\tau_{12}=\alpha^{2}\tau_{11,c}\tau_{22}\text{.}%
\end{equation}
Substituting for $\tau_{21}\tau_{12}$ in (\ref{H}), equation (\ref{HVV}) gets
the form%
\begin{equation}
\tau_{11,c}H_{T}V_{1}=V_{1},\label{tawcHT}%
\end{equation}
where $H_{T}$ is defined in (\ref{HT}). In order for (\ref{tawcHT}) to have
solutions, $\tau_{11,c}$ must be the inverse of one of the eigenvalues of
$H_{T}$. However, the corresponding eigenvector $V_{1}$ must have all positive
entries. According to Theorem \ref{The: HT}, $H_{T}$ is an irreducible matrix.
Therefore, $V_{1}$ is positive only for the eigenvector corresponding to the
largest eigenvalue. Therefore, $\tau_{11,c}$ is equal to $1/\lambda_{1}%
(H_{T})$.
\end{proof}

\subsection{Interconnection Topology and Epidemic Spreading Modes}

We used a bifurcation method to find an expression for the epidemic threshold.
According to our definition, the epidemic threshold is a critical value such
that for any infection strength $\tau_{11}>\tau_{11,c}$, the steady state
values of the infection probabilities are positive. Consider the special case
where the infection strength $\tau_{22}$ is very close to the spectral radius
$\lambda_{1}(A_{22})$, i.e., $\tau_{22}\rightarrow\lambda_{1}(A_{22})$.
According to (\ref{HT}) and (\ref{taw11c}), the epidemic threshold
$\tau_{11,c}\rightarrow0$ as far as the whole contact network is connected.
This argument is true even for very weak interconnection between the two
networks $\mathcal{G}_{1}$ and $\mathcal{G}_{2}$. The reason for this
observation is that since $\tau_{22}\rightarrow\lambda_{1}(A_{22})$, only a
small amount of interconnection will lead to an outbreak in $\mathcal{G}_{2}$.
Since, the probability of infection in $\mathcal{G}_{2}$ becomes a positive
value, according to Lemma \ref{Lemma: SS}, the probability of infection in
$\mathcal{G}_{1}$ is also positive. Therefore in the case of $\tau
_{22}\rightarrow\lambda_{1}(A_{22})$ and weak interconnection, the positive
infection probabilities in $\mathcal{G}_{1}$ for small values of $\tau_{11}$
is only due to epidemic outbreak in $\mathcal{G}_{2}$.

The numerical simulations in Section \ref{Section: Simulation} illustrates
three possible curves of $\tau_{11,c}$ as a function of $\tau_{22}$, as shown
in Fig. \ref{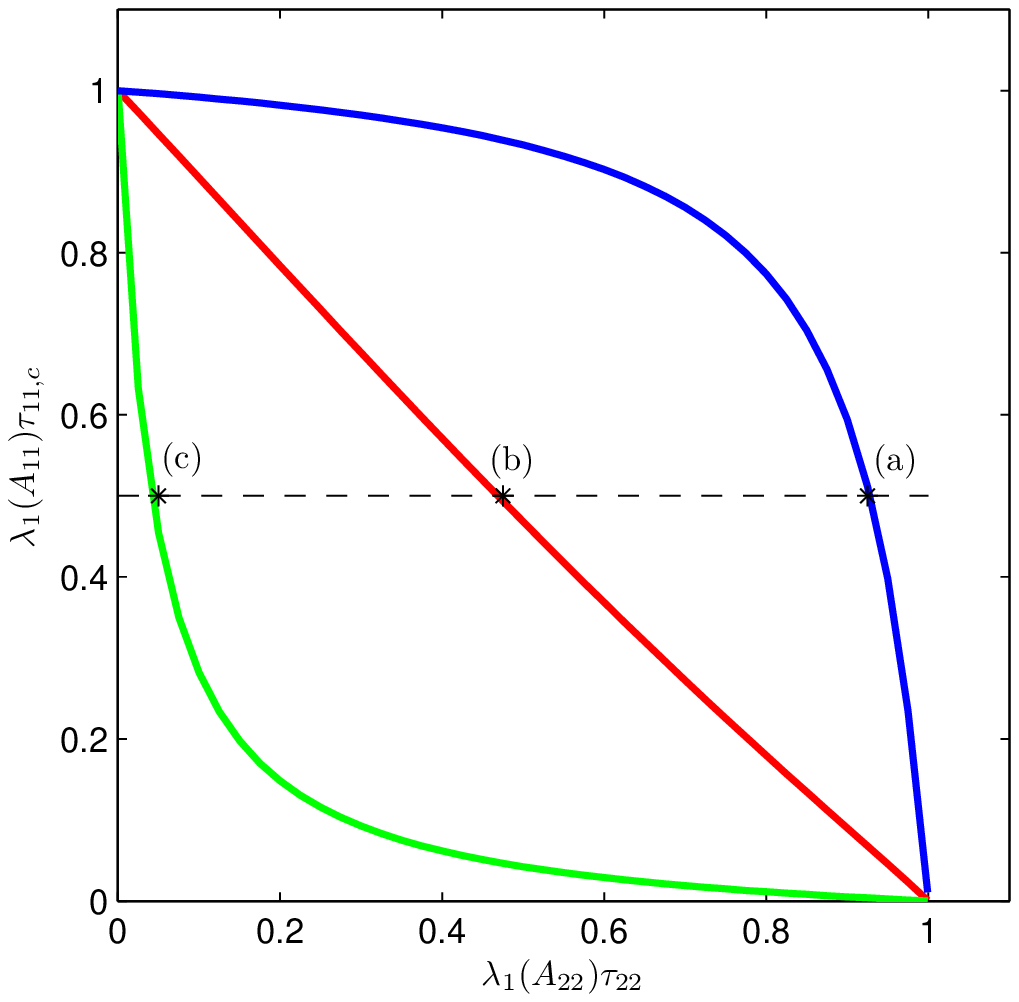}. Here, the blue curve belongs to the case of weak
interconnection between the two graphs. As can be seen, the decrease in the
epidemic threshold $\tau_{11,c}$ is very slow for small values of $\tau_{22}$,
while there is a quite sharp drop in the values of $\tau_{11,c}$ as $\tau
_{22}\rightarrow\lambda_{1}(A_{22})$. For strong interconnection topology,
shown by the green curve, the value of $\tau_{11,c}$ is decreasing quickly for
small values of $\tau_{22}$. In this case, the infection in $\mathcal{G}_{1}$
starts to grow not only as the result of receiving the infection from
$\mathcal{G}_{2}$, but also as the result of a survivable internal infection
force. The red curve is an intermediate between the two spreading modes.

\begin{theorem}
The derivative $\frac{d\tau_{11,c}}{d\tau_{22}}$ at $\tau_{22}=0$ is%
\begin{equation}
\left.  \frac{d\tau_{11,c}}{d\tau_{22}}\right\vert _{\tau_{22}=0}=\frac
{\alpha^{2}\left\Vert A_{12}^{T}x_{1}\right\Vert _{2}^{2}}{\lambda_{1}%
^{2}(A_{11})},\label{DER}%
\end{equation}
where $x_{1}$ is the eigenvector of $A_{11}$ belonging to $\lambda_{1}%
(A_{11}).$
\end{theorem}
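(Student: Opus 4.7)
The plan is to implicitly differentiate $\tau_{11,c}(\tau_{22})=1/\lambda_1(H_T(\tau_{22}))$ and use first-order eigenvalue perturbation theory for symmetric matrices. Setting $\lambda(\tau_{22})\triangleq\lambda_1(H_T(\tau_{22}))$, the chain rule gives $\frac{d\tau_{11,c}}{d\tau_{22}}=-\frac{1}{\lambda^2}\frac{d\lambda}{d\tau_{22}}$, so the evaluation at $\tau_{22}=0$ splits cleanly into the value $\lambda(0)$ and the derivative $\lambda'(0)$.

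The value at zero is immediate: $H_T(0)=A_{11}$, hence $\lambda(0)=\lambda_1(A_{11})$, which produces the denominator of (\ref{DER}). For $\lambda'(0)$, I expand the $\tau_{22}$-dependence of $H_T$ using the Neumann series $(I-\tau_{22}A_{22})^{-1}=I+\tau_{22}A_{22}+O(\tau_{22}^2)$, valid in a neighborhood of $\tau_{22}=0$, which yields
\begin{equation}
H_T(\tau_{22})=A_{11}+\alpha^2\tau_{22}A_{12}A_{12}^T+O(\tau_{22}^2),
\end{equation}
so that $\left.dH_T/d\tau_{22}\right|_{\tau_{22}=0}=\alpha^2 A_{12}A_{12}^T$. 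Since $H_T(0)=A_{11}$ is real symmetric and $\lambda_1(A_{11})$ is simple with unit Perron eigenvector $x_1$, the Hellmann--Feynman/Rellich formula $\lambda'(0)=x_1^T H_T'(0)\,x_1$ delivers
\begin{equation}
\left.\frac{d\lambda}{d\tau_{22}}\right|_{\tau_{22}=0}=\alpha^2\, x_1^T A_{12}A_{12}^T x_1=\alpha^2\|A_{12}^T x_1\|_2^2,
\end{equation}
and substitution into the chain-rule identity produces the expression on the right-hand side of (\ref{DER}).

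The main obstacle is smoothness of the spectral branch at $\tau_{22}=0$, since the perturbation formula $\lambda'(0)=x_1^T H_T'(0)x_1$ requires $\lambda_1(A_{11})$ to be simple. When the contact subgraph $\mathcal{G}_1$ is connected, $A_{11}$ is irreducible and the Perron--Frobenius theorem guarantees simplicity, together with a positive unit eigenvector $x_1$; the branch $\lambda(\tau_{22})$ is then real-analytic in a neighborhood of the origin by standard results (e.g., Kato's analytic perturbation theory), justifying the termwise differentiation. If $\mathcal{G}_1$ is disconnected so that $\lambda_1(A_{11})$ may be degenerate, I would choose $x_1$ as the Perron eigenvector of the connected component of $\mathcal{G}_1$ achieving the maximum spectral radius and show that this branch continues analytically; the identity (\ref{DER}) then persists. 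A minor bookkeeping point is that the chain-rule computation carries an overall minus sign, in agreement with the monotone decrease of $\tau_{11,c}$ in $\tau_{22}$ exhibited in Fig.~\ref{tc1vst2.eps}; this sign convention would need to be reconciled with the form printed in the statement when writing up the final proof.
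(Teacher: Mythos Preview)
Your proof is correct and follows essentially the same line as the paper: both expand $H_T(\tau_{22})=A_{11}+\alpha^2\tau_{22}A_{12}A_{12}^T+o(\tau_{22})$ and extract the first-order eigenvalue sensitivity by projecting onto the Perron eigenvector $x_1$ --- you invoke Hellmann--Feynman directly, whereas the paper differentiates the relation $\tau_{11,c}H_TV_1=V_1$ and left-multiplies by $x_1^T$, which is the same computation written out by hand. Your caveat about the sign is also on point: the paper's own derivation yields $\lambda_1(A_{11})\tfrac{d\tau_{11,c}}{d\tau_{22}}+\tfrac{\alpha^2}{\lambda_1(A_{11})}\|A_{12}^Tx_1\|_2^2=0$, so the printed formula (\ref{DER}) is missing the minus sign you identified.
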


\begin{proof}
The matrix $H_{T}$ from (\ref{HT}) can be written as%
\[
H_{T}\triangleq A_{11}+\alpha^{2}\tau_{22}A_{12}A_{12}^{T}+o(\tau
_{22})\text{.}%
\]
Therefore, taking the derivative of (\ref{tawcHT}) with respect to $\tau_{22}$
at $\tau_{22}=0$ yields%
\begin{multline}
\frac{d\tau_{11,c}}{d\tau_{22}}A_{11}x_{1}+\frac{1}{\lambda_{1}(A_{11}%
)}(\alpha^{2}A_{12}A_{12}^{T})x_{1}\label{der1}\\
+(\frac{1}{\lambda_{1}(A_{11})}A_{11}-I)\frac{dV_{1}}{d\tau_{22}}=0.
\end{multline}
Multiplying (\ref{der1}) by $x_{1}^{T}$ from left, we get%
\begin{multline}
\frac{d\tau_{11,c}}{d\tau_{22}}x_{1}^{T}A_{11}x_{1}+\frac{\alpha^{2}}%
{\lambda_{1}(A_{11})}x_{1}^{T}(A_{12}A_{12}^{T})x_{1}\label{der2}\\
+x_{1}^{T}(\frac{1}{\lambda_{1}(A_{11})}A_{11}-I)\frac{dV_{1}}{d\tau_{22}}=0.
\end{multline}
In (\ref{der2}), since $x_{1}$is the normalized eigenvector of $A_{11}$
corresponding to $\lambda_{1}(A_{11})$, we have $x_{1}^{T}A_{11}x_{1}%
=\lambda_{1}(A_{11})$ and $x_{1}^{T}(\frac{1}{\lambda_{1}(A_{11})}A_{11}-I)=0$
for $A_{11}$ is symmetric. Therefore, equation (\ref{der2}) becomes%
\begin{equation}
\lambda_{1}(A_{11})\frac{d\tau_{11,c}}{d\tau_{22}}+\frac{\alpha^{2}}%
{\lambda_{1}(A_{11})}(A_{12}^{T}x_{1})^{T}(A_{12}^{T}x_{1})=0.
\end{equation}
Hence, $\frac{d\tau_{11,c}}{d\tau_{22}}$ is found to be (\ref{DER}).
\end{proof}

According to (\ref{DER}) and the proceeding arguments, we define
interconnection topology measure%
\begin{equation}
\Omega\left(  \mathcal{G}_{1},\mathcal{G}_{2}\right)  \triangleq\frac
{\alpha^{2}\left\Vert A_{12}^{T}x_{1}\right\Vert _{2}^{2}}{\lambda_{1}%
(A_{11})\lambda_{1}(A_{22})}.\label{OHM}%
\end{equation}
If $\Omega\left(  \mathcal{G}_{1},\mathcal{G}_{2}\right)  <1$, then for the
infection strength $\tau_{11}$ right above the threshold $\tau_{11,c}$ in
(\ref{taw11c}), the positive infection probability in $\mathcal{G}_{1}$ is
mostly due to external infections from $\mathcal{G}_{2}$. While if
$\Omega\left(  \mathcal{G}_{1},\mathcal{G}_{2}\right)  >1$, then for the
infection strength $\tau_{11}$ right above the threshold $\tau_{11,c}$ in
(\ref{taw11c}), the positive infection probability in $\mathcal{G}_{1}$ is
mostly due to a survivable internal infection force as the result of increased
effective level of contact among agents of $\mathcal{G}_{1}$.

\begin{remark}
A very interesting property of $\Omega\left(  \mathcal{G}_{1},\mathcal{G}%
_{2}\right)  $ defined in (\ref{OHM}) is that it is a purely topological
measure and does not depend on the epidemic parameters.
\end{remark}

\section{Numerical Simulation Results\label{Section: Simulation}}

We have generated two graphs according to the small world random network model
\cite{newman2002PNAS}. The first network has $N_{1}=500$ vertices with Watts
and Strogatz parameters for mean degree $K_{1}=10$ and the rewiring
probability $\beta_{1}=0.2$. For this graph, the spectral radius is found to
be $\lambda_{1}(A_{11})=22.0586$. The second network has $N_{2}=100$ vertices
with the Watts and Strogatz parameters for mean degree $K=2$ and the rewiring
probability $\beta_{2}=0.1$. For this graph, the spectral radius is found to
be $\lambda_{1}(A_{22})=4.3$. For the interconnection of these two graphs, we
use the following rule. All the potential edges between the first graph and
the second graph are active with some probability $\omega$, to be chosen.

In the first simulation, $\bar{\tau}_{c1}=\lambda_{1}(A_{11})\tau_{11,c}$ is
plotted as a function of $\bar{\tau}_{2}=\lambda_{1}(A_{22})\tau_{22}$, for
three different values of $\omega=0.01,0.042,0.2$. The numerical results for
the three cases are shown in Fig. \ref{tc1vst2.eps}.%
\begin{figure}[ptb]%
\centering
\includegraphics[ width=3.5 in]{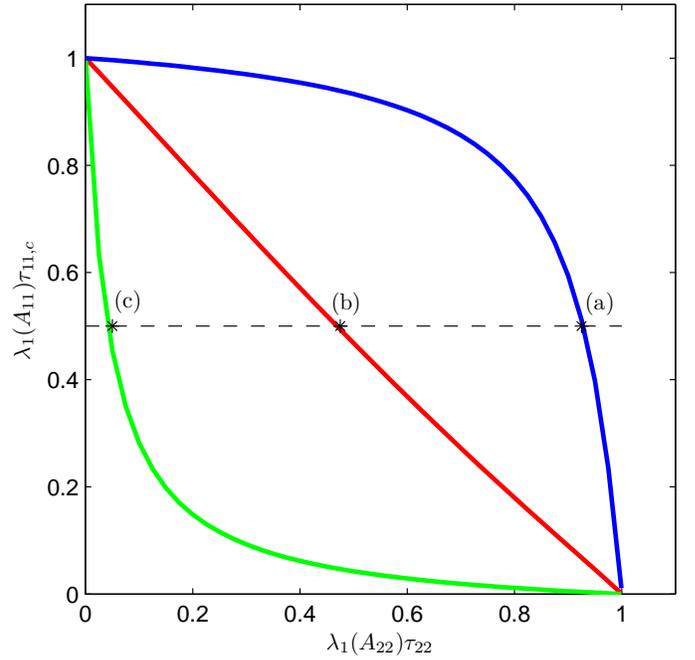}
\caption{Normalized epidemic threshold
$\bar{\tau}_{c1}=\lambda_{1}(A_{11})\tau_{11,c}$ of graph $\mathcal{G}_{1}%
$\ as a function of the normalized infection strength $\bar{\tau}_{2}%
=\lambda_{1}(A_{22})\tau_{22}$ of graph $\mathcal{G}_{2}$. The interconnection
in (a) $\omega=0.01$, the blue curve, (b) $\omega=0.042$, red curve, and (c)
$\omega=0.2,$ green curve. A $\%50$ reduction of the epidmeic threshold is
observed for the normalized infection strengths (a) $\bar{\tau}_{2}=0.925$,
(b) $\bar{\tau}_{2}=0.5$, (c) $\bar{\tau}_{2}=0.05$.}%
\label{tc1vst2.eps}%
\end{figure}
As can be seen from the blue curve, which is for $\omega=0.01$, for weak
interconnection among the graphs the emergence of positive steady state values
for the infection probability in graph $\mathcal{G}_{1}$\ is due to an
outbreak in $\mathcal{G}_{2}$. While for strong interconnection $\omega=0.2$,
which is shown with the green curve in Fig. \ref{tc1vst2.eps}, the emergence
of positive steady state values for the infection probability in graph
$\mathcal{G}_{1}$ is more due to increased level of effective contact among
the agents in $\mathcal{G}_{1}$. The red curve in Fig. \ref{tc1vst2.eps}
belongs to an intermediate interconnection strength, here $\omega=0.042$.

According to Fig. \ref{tc1vst2.eps},a $\%50$ reduction of the epidemic
threshold is observed in $\mathcal{G}_{1}$for (a) $\omega=0.01$ and $\bar
{\tau}_{2}=0.925$, (b) $\omega=0.042$ and $\bar{\tau}_{2}=0.5$, (c)
$\omega=0.2$ and $\bar{\tau}_{2}=0.05$. We have plotted the curves of $\bar
{p}_{1}^{\ast}=\frac{1}{N_{1}}\sum_{i=1}^{N_{1}}p_{i}^{\ast}$ as a function of
$\tau_{11}\lambda_{1}(A_{11})$. We have found the equilibrium values of
$p_{i}^{\ast}$ by solving the algebraic equations (\ref{pi_ss_1}) and
(\ref{pi_ss_2}). The numerical method for solving these equations is presented
in the Appendix.%

\begin{figure}[ptb]%
\centering
\includegraphics[ width=3.5 in]{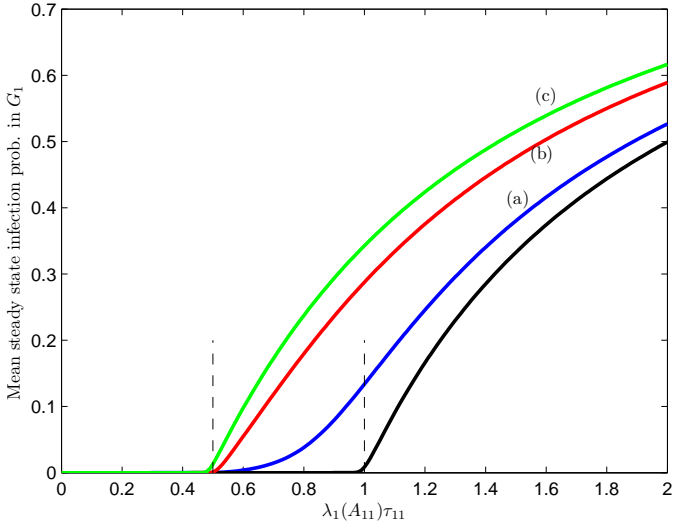}
\caption{The mean steady state infection
probability in $G_{1}$ as a function of the normalized infection strength of
$\lambda_{1}(A_{11})\tau_{11}$ for graph $G_{1} $. Black curve corresponds to
the case where there is no interconnection. In this case, the epidemic
threshold is $\tau_{11,c}=1/\lambda_{1}(A_{11})$. All the other curves
correspond to the case where $\tau_{11,c}=\frac{1}{2}\times1/\lambda
_{1}(A_{11})$. For (a) the blue curve $\omega=0.01$ and $\bar{\tau}_{2}%
=0.925$, (b) the red curve $\omega=0.042$ and $\bar{\tau}_{2}=0.5$, (c) the
green curve $\omega=0.2$ and $\bar{\tau}_{2}=0.05$.}%
\label{pss1.eps}%
\end{figure}

\section{Conclusion}

In epidemic modeling, the term infection strength indicates the ratio of
infection rate and cure rate. If the infection strength is higher than a
certain threshold then the epidemic spreads through the population and
persists in the long run. For a single generic graph representing the contact
network of the population under consideration, the epidemic threshold turns
out to be equal to the inverse of the spectral radius of the contact graph.
However, in a real world scenario it is not possible to isolate a population
completely: there is always some interconnection with another network, which
partially overlaps with the contact network. We study the spreading process of
a susceptible-infected-susceptible (SIS) type epidemic model in an
interconnected network of two general graphs. First, we prove that
interconnection always increases the probability of infection. Second, we find
that the epidemic threshold for a network interconnected to another network
with a given infection strengths rigorously derived as the spectral radius of
a new matrix which accounts for the two networks and their interconnection
links. The main contribution of this paper is the use of spectral analysis to
analyze epidemic spreading in interconnected networks. Our results have
implications for the broad field of epidemic modeling and control.

\section{Acknowledgement}

This work was supported by the National Agricultural Biosecurity Center (NABC)
at Kansas State University. It was also based on work partially supported by
the US National Science Foundation, while one of the authors, Fahmida N.
Chowdhury, was working at the Foundation. Any opinion, finding, and
conclusions or recommendations expressed in this material are those of the
authors and do not necessarily reflect the views of the National Science Foundation.

\bibliographystyle{IEEEtran}
\bibliography{InterConnectedEpidemic}

\section{Appendix}


In order to numerically solve (\ref{pi_ss_1}) and (\ref{pi_ss_2}), we can use
the recursive formula%

\begin{multline}
y_{i}(k+1)=\tau_{11}\sum_{j=1}^{N_{1}}a_{ij}\frac{y_{j}(k)}{1+y_{j}(k)}%
+\tau_{12}\sum_{j=N_{1}+1}^{N_{1}+N_{2}}a_{ij}\frac{y_{j}(k)}{1+y_{j}(k)},\\
i\in\{1,...,N_{1}\},
\end{multline}%
\begin{multline}
y_{i}(k+1)=\tau_{21}\sum_{j=1}^{N_{1}}a_{ij}\frac{y_{j}(k)}{1+y_{j}(k)}%
+\tau_{22}\sum_{j=N_{1}+1}^{N_{1}+N_{2}}a_{ij}\frac{y_{j}(k)}{1+y_{j}(k)},\\
i\in\{N_{1}+1,...,N_{1}+N_{2}\},
\end{multline}
which converges to the fixed points $y_{i}^{\ast}:=\frac{p_{i}^{\ast}}%
{1-p_{i}^{\ast}}$, for $i\in\{1,...,N_{1}+N_{2}\}$. The steady state values of
the infection probabilities are then $p_{i}^{\ast}=\frac{y_{i}^{\ast}}%
{1+y_{i}^{\ast}}$.

\end{document}